\newif\ifabstract
\abstracttrue
\newif\iffull
\ifabstract \fullfalse \else \fulltrue \fi

\iffull
\documentclass[11pt,letterpaper]{article}
\usepackage[margin=1in]{geometry}
\usepackage{amsthm}
\usepackage{amsmath}

\newtheorem{theorem}{Theorem}[section]
\newtheorem{corollary}{Corollary}[section]

\fi
\ifabstract
\documentclass{cccg24}
\usepackage{graphicx,amssymb,amsmath}

\makeatletter
\renewcommand{\paragraph}{%
  \@startsection{paragraph}{4}%
  {\z@}{2ex \@plus 1ex \@minus .2ex}{-1em}%
  {\normalfont\normalsize\bfseries}%
}
\makeatother
\setlength{\belowcaptionskip}{-10pt}

\fi

\usepackage{enumitem}

\usepackage{graphicx}
\usepackage{todonotes}
\usepackage{url}
\usepackage{CJKutf8} 

\usepackage[hidelinks]{hyperref}
\usepackage[capitalise]{cleveref}

\graphicspath{{./graphics/}}

\newtheorem{corollary}[theorem]{Corollary}

\def\defn#1{\textit{\textbf{\boldmath #1}}\index{#1}}

\title{Slant/Gokigen Naname is NP-complete, and Some Variations are in P}
\author{
Jayson Lynch\thanks{MIT-CSAIL, {\tt jaysonl@mit.edu}}
\and
Jack Spalding-Jamieson\thanks{David R. Cheriton School of Computer Science, University of Waterloo, {\tt jacksj@uwaterloo.ca}}
}

\begin{document}

\maketitle

\begin{abstract}
	In this paper we show that a generalized version of the Nikoli puzzle Slant is NP-complete. We also give polynomial time algorithms for versions of the puzzle where some constraints are omitted. These problems correspond to simultaneously satisfying connectivity and vertex degree constraints in a grid graph and its dual.
\end{abstract}

\section{Introduction}

Gokigen Naname, also known as Slant, is a pencil-and-paper logic puzzle by the Publisher Nikoli~\cite{Slant-website}. The puzzle involves filling every square in a grid with diagonal lines so they do not create cycles and each circle has the indicated number of diagonals touching it.
See \cref{fig:basic-example-board-and-solution} for an
example of a Slant puzzle.

\begin{figure}
    \centering
    \includegraphics[scale=1.1,page=1]{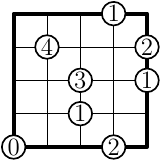}
    \hspace{1.5em}
    \includegraphics[scale=1.1,page=2]{basic-example-board}
    \caption{A simple example of a Slant puzzle, and its (unique) solution.}
    \label{fig:basic-example-board-and-solution}
\end{figure}

\paragraph{Rules}
Slant is played on a square grid, some of whose vertices may be given numbers (henceforth called a \defn{Slant board}).
The objective is to add a single diagonal line to each square obeying the following constraints:
\begin{enumerate}[noitemsep,topsep=0pt,parsep=0pt,partopsep=0pt]
    \item There is no cycle formed by diagonal lines.
    \item If a vertex has a designated number $k\in\{0,1,2,3,4\}$, the number of diagonal lines adjacent to the vertex must be exactly $k$.
\end{enumerate}
These shall henceforth be referred to as the \defn{cycle constraint}
and \defn{vertex constraints}.
Examples of solutions violating each can be seen in
\cref{fig:constraint-violation-examples}.

\begin{figure}[b]
    \centering
    \includegraphics[scale=1.1,page=2]{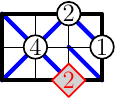}
    \hspace{3em}
    \includegraphics[scale=1.1,page=1]{constraint-violation-examples}
    \caption{An example of each class of constraint violation denoted in red
        (the cycle constraint (left) and a vertex constraint (right)).}
    \label{fig:constraint-violation-examples}
\end{figure}

\paragraph{Results}
In \cref{sec:Preliminary} we discuss some combinatorial properties of the puzzle which will be used in our algorithmic and computational complexity results.

In \cref{sec:Polynomial Time} we give multiple algorithmic results for special cases of the Slant problem. We show that Slant is fixed-parameter tractable in the number of vertex constraints in the puzzle. We give an algorithm for deciding if a partially filled board can be completed without violating the cycle constraint. Finally we show how to formulate the vertex constraints and the cycle constraint as a matroid intersection problem allowing us to solve instances of the puzzle if the vertex constraints stay within certain partitions of the vertices.

Finally, in \cref{sec:NP-completeness}, we show that solving an instance of Slant on an $n\times m$ grid is NP-complete via a reduction from Hamiltonian cycle in max-degree-3 planar graphs.

\begin{figure}[t]
    \centering
    \includegraphics[scale=.75,page=1]{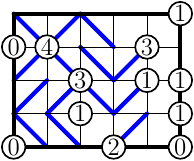}
    \hspace{0.25em}
    \includegraphics[scale=.75,page=2]{partial-board-fill-examples}
    \hspace{0.25em}
    \includegraphics[scale=.75,page=3]{partial-board-fill-examples}
    \caption{Two examples of partially-filled boards (left, middle).
        The leftmost board has no valid extension,
        but the middle board has a unique one, shown on the right.}
    \label{fig:partial-board-fill-examples}
\end{figure}

\paragraph{Related Work}
Many pencil-and-paper puzzles have been studied from the lens of computational complexity, especially those by the designer Nikoli. For example the following puzzles all have results showing generalized versions to be NP-complete:
Angle Loop~\cite{tang2022framework},
Bag / Corral~\cite{Friedman-2002-corral},
Chained Block~\cite{iwamoto2024chained},
Country Road~\cite{YajilinandCountryRoad},
Fillomino~\cite{Yato-2003},
Five-Cells~\cite{iwamoto2022five},
Hashiwokakero~\cite{Hashiwokakero},
Hebi~\cite{kanehiro2015satogaeri},
Heyawake~\cite{Holzer-Ruepp-2007},
Hiroimono / Goishi Hiroi~\cite{Andersson-2007},
Kouchoku~\cite{tang2022framework},
Kurodoko~\cite{Kurodoko},
Light Up / Akari~\cite{McPhail-2005-LightUp},
LITS~\cite{McPhail-2007-LITS},
Masyu / Pearl~\cite{Friedman-2002-pearl},
Mid-Loop~\cite{tang2022framework},
 Nagareru Loop~\cite{iwamoto2022moon},
 Nagenawa~\cite{tang2022framework},
 Nurimeizu\cite{iwamoto2022moon},
Numberlink~\cite{NumberlinkNP},
Nurikabe~\cite{McPhail-2003,Holzer-Klein-Kutrib-2004},
Ring-ring~\cite{tang2022framework},
Satogaeri~\cite{kanehiro2015satogaeri},
Shakashaka~\cite{Shakashaka,NumberlessShakashaka_CCCG2015},
Slitherlink~\cite{Yato-Seta-2003,Yato-2003,Witness_FUN2018},
Spiral Galaxies / Tentai Show~\cite{SpiralGalaxies, demaine2023rectangular},
Suraromu~\cite{kanehiro2015satogaeri},
Tatamibari~\cite{adler2020tatamibari},
Yajilin~\cite{YajilinandCountryRoad},
and
Yosenabe~\cite{Yosenabe}. There have also been multiple surveys on the topic of computational complexity and games and puzzles~\cite{uehara2023computational,kendall2008survey,demaine2001playing}.

Previous work showed NP-hardness for Yin-Yang puzzles and connected it to the problem of partitioning vertices into two trees with some vertices being pre-assigned\cite{demaine2021yin}. We find that Slant is also equivalent to a fairly natural graph problem in which we want to partition edges between a graph and its dual such that each forms a single tree with some vertices having degree constraints. This is discussed in more detail in Section~\ref{sec:Preliminary}.

\section{Preliminary Results and Key Observations}
\label{sec:Preliminary}

We discuss some preliminary results about Slant that will be used in other parts of the paper.

First, we observe that the two choices of diagonals have a parity constraint.
Each possible diagonal connects two vertices which differ by exactly one in both their $x$ and $y$-coordinates.
Thus, we can partition the grid into points whose coordinate-sum ($x+y$) is even,
and points whose coordinate-sum is odd.
Every square then admits a diagonal connecting two odd-sum points or a diagonal connecting two even-sum points.
Furthermore, if we rotate all of our coordinates by 45 degrees
it becomes obvious that every such diagonal belongs to one of two square grids graphs
(see \cref{fig:basic-example-board-45-rotation}).
Moreover, the two square grid graphs are essentially planar duals of each other.
That is, the faces of one correspond to the vertices of the other, and vice versa, with the same edge-vertex and edge-face incidences (with some exceptions for the boundary).
Thus, choosing a diagonal involves deciding to place an edge either in a primal square grid graph or a dual square grid graph.

\begin{figure}
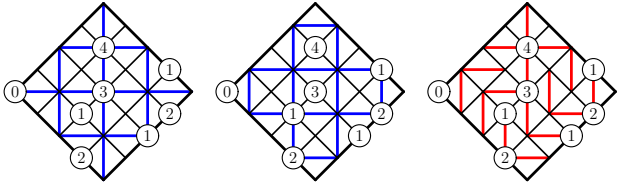

    \centering
    \includegraphics[scale=0.52,page=4]{basic-example-board}
    \hspace{0.15em}
    \includegraphics[scale=0.52,page=5]{basic-example-board}
    \hspace{0.15em}
    \includegraphics[scale=0.52,page=6]{basic-example-board}
    \caption{An example of how the Slant puzzle
    from
    \cref{fig:basic-example-board-and-solution}
    can be rotated $45^\circ$ so that all its potential edges come from two complementary grid graphs.}
    \label{fig:basic-example-board-45-rotation}
\end{figure}

With this dual graph perspective,
we can see that the acyclic constraint also implies that both graphs form forests where every tree touches the boundary.
By adding edges connecting the various points at which the boundary is touched, the complementary forests can be turned into a tree-cotree structure.
This also means that each parity class must contain roughly half the edges, in particular $(n-1)(m-1)/2+O(n+m)$ edges and the average degree is also $2+O(\frac{n+m}{nm})$.

\section{Polynomial Time Algorithms for Special Cases}
\label{sec:Polynomial Time}

We present variants of Slant that can be solved in polynomial time which use subclasses of the constraints.
The most important result in this section is a positive result on partially-filled boards:

\begin{theorem}
\label{thm:extend-partial-fill}
Given a partially-filled Slant board with no vertex constraints, whose filled-in diagonals do not already form a cycle, there always exists a valid solution extending the partially-filled in diagonals.
\end{theorem}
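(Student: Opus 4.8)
The plan is to fill the empty squares one at a time, always keeping the placed diagonals acyclic, and to show that each empty square can be filled without ever creating a cycle. Since the board starts acyclic and there are no vertex constraints, an acyclic full fill is automatically a valid solution, so it suffices to prove the following local claim: given any acyclic partial fill and any empty square $s$, at least one of the two diagonals of $s$ can be added without forming a cycle. A straightforward induction on the number of empty squares then produces the desired extension.

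To prove the local claim I would use the parity/duality picture from \cref{sec:Preliminary}. The two diagonals of $s$ are an edge $e$ of the even-parity grid graph and an edge $f$ of the odd-parity grid graph, and these are the only two segments passing through the center of $s$. Adding $e$ creates a cycle exactly when the two endpoints of $e$ are already joined by a path $P$ of placed even-diagonals; because the placed even-diagonals form a forest, $P$ is the unique such path and is simple. Geometrically $P$ together with the segment $e$ forms a simple closed (Jordan) curve $C$: no placed diagonal lies in square $s$ (its even-diagonal $e$ is not yet placed), and diagonals in distinct squares never cross, so $P$ is a non-self-crossing polyline closed up by $e$.

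The crux is to argue that if $e$ is ``unsafe'' (so such a $C$ exists), then $f$ must be ``safe''. The segment $f$ meets $C$ exactly once, crossing $e$ at the center of $s$, so the two endpoints of $f$ lie on opposite sides of $C$ (one inside, one outside). Suppose for contradiction that $f$ were also unsafe, i.e.\ its endpoints were joined by a path $Q$ of placed odd-diagonals. Then $Q$ would have to cross the Jordan curve $C$ an odd number of times. But every edge of $Q$ is a placed odd-diagonal, and such a segment can cross neither the even-diagonals making up $P$ (they lie in different squares) nor the segment $e$ itself (the only odd-diagonal through square $s$ is $f$, which is not placed). Hence $Q$ crosses $C$ zero times, a contradiction. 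Thus at most one of $e,f$ is unsafe, which is exactly the local claim.

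I expect the main obstacle to be making this crossing/parity argument fully rigorous near the grid boundary, where, as noted in \cref{sec:Preliminary}, the two parity graphs fail to be exact planar duals. I would circumvent this by phrasing the whole argument in terms of the planar geometry of the diagonal segments and the Jordan curve theorem, which is insensitive to those boundary discrepancies, rather than invoking a precise combinatorial tree--cotree duality. A secondary technical point to nail down is transversality of the crossings (so that ``opposite sides'' and ``odd number of crossings'' are well defined), which I would address by the observation that all crossings occur only at square centers and are genuinely transversal there.
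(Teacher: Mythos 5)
Your proposal is correct and follows essentially the same route as the paper's proof: greedily fill squares one at a time, and when one diagonal would close a cycle, observe that the resulting cycle is a Jordan curve separating the endpoints of the other diagonal, which no path of opposite-parity diagonals can cross (since the two parity classes share no vertices and can only cross at the center of the as-yet-unfilled square). Your write-up is a more detailed and rigorous elaboration of the same argument, but it is not a genuinely different approach.
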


\begin{figure}
    \centering
    \includegraphics[scale=0.7,page=1]{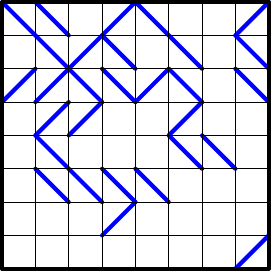}
    \hspace{1.5em}
    \includegraphics[scale=0.7,page=2]{blank-partial-board-fill-example}
    \caption{An example of a partial solution with no cycles or constraints, and a valid extension.}
    \label{fig:partial-fill-example}
\end{figure}

See \cref{fig:partial-fill-example} for an example.
Note that there is no difference between a partially-filled board with no vertex constraints,
and one whose vertex constraints all have their incident cells filled.
However, the lack of vertex constraints is necessary for this universality
(see \cref{fig:partial-board-fill-examples}).

\begin{proof}
Iterate through the unfilled squares in any order and make a choice of diagonal arbitrarily.
If that choice induces a cycle, choose the other diagonal instead. These two choices of diagonals connect vertices of different parities (odd degree vs even degree coordinate-sums),
and thus any curve they are a part of do not share vertices. Further, each pair of vertices is on opposite sides of the diagonal. Thus by required planarity of the connections and the Jordan curve theorem\footnote{Here we can use the simpler result that a simple polygonal curve divides the plane into two regions}\cite{thomassen1992jordan}, the other diagonal will not induce a cycle
(see \cref{fig:planarity-implies-greedy-fill} and \cref{fig:planarity-implies-greedy-fill-concrete}).
\end{proof}

\begin{figure}
    \centering
    \includegraphics[scale=0.9,page=1]{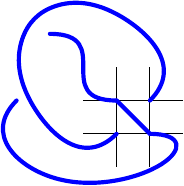}
    \hspace{3em}
    \includegraphics[scale=0.9,page=2]{planarity-implies-greedy-fill}
    \caption{A general demonstration of why if one configuration of a square (left) induces a cycle,
    the other (right) cannot.}
    \label{fig:planarity-implies-greedy-fill}
\end{figure}

\begin{figure}
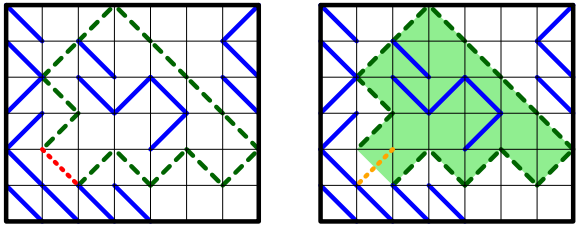

    \centering
    \includegraphics[scale=0.85,page=4]{blank-partial-board-fill-example}
    \hspace{1.5em}
    \includegraphics[scale=0.85,page=5]{blank-partial-board-fill-example}
    \caption{A concrete example of why if one configuration of a square (left) induces a cycle,
    the other (right) cannot (complementary to \cref{fig:planarity-implies-greedy-fill}).}
    \label{fig:planarity-implies-greedy-fill-concrete}
\end{figure}

\Cref{thm:extend-partial-fill}
will be particularly useful
as a component of the NP-hardness result in \cref{sec:NP-completeness},
and we will also use it to prove results in this section.

\begin{corollary}
\label{cor:4-and-0}
Given a Slant board whose designated numbers are all either $0$ or $4$,
it can be checked in polynomial time if the board permits a valid solution.
\end{corollary}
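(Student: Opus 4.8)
The plan is to reduce to \cref{thm:extend-partial-fill} by exploiting that $0$- and $4$-constraints are maximally rigid: each forces the diagonals of all of its incident squares. Indeed, if a vertex $v$ carries a $0$, then no diagonal may touch $v$, so each of the (at most four) squares having $v$ as a corner must take the diagonal missing $v$. Dually, a vertex carrying a $4$ requires all four of its incident squares to exist and to take the diagonal touching $v$, so a $4$ placed on a boundary or corner vertex is immediately infeasible. In either case every square with a constrained corner is pinned to exactly one of its two diagonals.

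First I would sweep over the $0$- and $4$-vertices, reject any $4$-vertex lacking all four incident squares, and record for each square the diagonal its constrained corners demand. Distinct constraints may demand the same diagonal of a square (for instance two $0$'s on opposite corners, or a $4$ and a $0$ on corners sharing an edge, agree), but they may also conflict (two $0$'s on corners sharing an edge, or a $4$ and a $0$ on opposite corners). I therefore check, in $O(1)$ time per square, whether any square is forced to both of its diagonals at once; if so, I output that no solution exists. Finally I would test, via union-find or a single traversal, whether the forced diagonals already form a cycle, rejecting if they do.

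The key structural observation is that a square is forced if and only if it has a constrained corner; hence every \emph{unforced} square is genuinely unconstrained, while the forced squares alone already pin every $0$- and $4$-vertex to its required degree. So once the forcing is consistent and acyclic, we are precisely in the hypothesis of \cref{thm:extend-partial-fill}: a partial fill, with no vertex constraints bearing on its empty squares, whose filled diagonals are acyclic. That theorem then completes the partial fill to a cycle-free, hence valid, solution, and the board is solvable exactly when all three checks pass, which takes polynomial time. The main point to verify carefully is this dichotomy --- a square is forced exactly when it has a constrained corner --- since it is what guarantees that the greedy extension of \cref{thm:extend-partial-fill} cannot violate any remaining vertex constraint; everything else is bookkeeping.
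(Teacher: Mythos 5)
Your proposal is correct and follows essentially the same route as the paper's own proof: force the unique diagonal in every square incident to a $0$- or $4$-vertex, reject on conflicts or cycles among the forced diagonals, and then invoke \cref{thm:extend-partial-fill} to complete the resulting partially-filled board (which has no remaining effective vertex constraints). The only cosmetic difference is that the paper explicitly observes that an interior $0$ forces a cycle, whereas you catch that case implicitly through the cycle check, and you explicitly reject boundary $4$'s, which the paper leaves implicit.
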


\begin{proof}
Every vertex constraint whose designated number is $0$ or $4$ requires
the same set of diagonals in any valid solution. In particular, degree-$4$ constrained vertices constraints require all four edges to be adjacent, and $0$ vertex constraints can only appear on the boundary (otherwise they force a cycle) and all edges to not be adjacent.
If there are any conflicts between these requirements,
or if they induce a cycle,
then the board does not permit a valid solution.
Otherwise, generate the corresponding partially-filled Slant board.
See \cref{fig:4-and-0-examples} for examples of each case.
If the board contains a cycle,
then it contains a cycle in any solution,
so it does not permit a valid solution.
Otherwise, apply \cref{thm:extend-partial-fill} to obtain a valid extension to the partially-filled board.
\end{proof}

\begin{figure}
    \centering
    \includegraphics[scale=1.1,page=1]{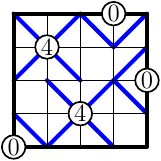}
    \hspace{1em}
    \includegraphics[scale=1.1,page=2]{4-and-0-examples}
    \caption{A board with designated number values of only $4$ and $0$ can generate either a valid partially-filled board generated (left),
    or induce cycle/constraint conflicts (right).}
    \label{fig:4-and-0-examples}
\end{figure}

Note that, via the same argument, this corollary can be extended to also allow designated numbers of $2$ along the boundaries,
and $1$ at corners (see \cref{fig:4-and-0-examples-extended}).

\begin{figure}
    \centering
    \includegraphics[scale=1.1,page=3]{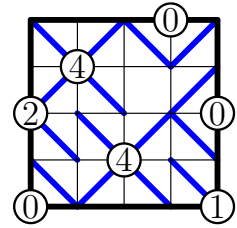}
    \caption{\cref{cor:4-and-0} extends to some special placements of $1$s and $2$s.}
    \label{fig:4-and-0-examples-extended}
\end{figure}

\subsection{Dense and Sparse Representations}

The decision variant of the Slant problem (whether or not a valid solution exists) is trivially in NP, but only if it is assumed that $\Theta(nm)$ values are used
to represent the full $n\times m$ grid (the \defn{dense representation}).
However, since not every vertex is required to have a designated number,
one could instead represent an instance of Slant with only the grid size,
and a list of non-empty coordinates with designated numbers (the \defn{sparse representation}).
In this representation, the size of the input may be asymptotically much smaller than $n\times m$.
We find that this does not change the complexity class:

\begin{theorem}
For an $n\times m$ Slant board with $k$ vertex constraints, using a sparse representation,
it can be checked if a valid solution exists in non-deterministic $\text{poly}(k\log(n+m))$ time.
\end{theorem}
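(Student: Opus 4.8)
The plan is to exploit \cref{thm:extend-partial-fill} to show that a short certificate suffices, avoiding the need to guess a diagonal for all $\Theta(nm)$ cells. The key observation is that each vertex constraint depends only on the (at most four) cells incident to it, so the \emph{only} cells relevant to satisfying the vertex constraints are the $O(k)$ cells adjacent to some constrained vertex. I would therefore have the nondeterministic machine guess a diagonal only for these $O(k)$ cells, leaving the remaining (exponentially more numerous) cells unspecified. This guess is a partial fill describable in $O(k)$ bits, comfortably within the $\text{poly}(k\log(n+m))$ budget.

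Concretely, the machine first computes, from the $k$ given constraint coordinates, the set $S$ of all cells incident to at least one constrained vertex; since each vertex is a corner of at most four cells, $|S|\le 4k$, and each cell is named by coordinates of $O(\log(n+m))$ bits. It then nondeterministically guesses one diagonal per cell in $S$ and performs two verifications. First, for each constrained vertex it counts how many of the guessed diagonals in its incident cells touch it, and checks that this equals the designated number; this is a purely local count. Second, it checks that the guessed diagonals form no cycle. Because the only filled cells are those in $S$, this amounts to testing acyclicity of the graph whose vertices are the $O(k)$ grid-points appearing as endpoints of guessed diagonals and whose edges are the guessed diagonals themselves. This graph can be built and tested for cycles (e.g.\ via union--find) using a dictionary keyed on the $O(\log(n+m))$-bit endpoint coordinates, so the machine never enumerates the full grid; the whole verification runs in $\text{poly}(k\log(n+m))$ time.

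For correctness, suppose the machine accepts. Then the guessed partial fill on $S$ satisfies all vertex constraints and contains no cycle, so \cref{thm:extend-partial-fill} guarantees a valid completion of the board, and this completion still satisfies every vertex constraint since those depend only on cells in $S$, which are unchanged. Conversely, if a valid solution exists, restricting it to the cells of $S$ yields a partial fill that the machine can guess: it meets the vertex constraints by the same locality, and it is acyclic because any subset of the diagonals of an acyclic solution is acyclic. (As a sanity check, a degree-$0$ constraint in the interior is correctly rejected: its four incident cells all lie in $S$ and are forced into a diamond cycle, which the acyclicity test detects.)

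The main obstacle is ensuring the verification stays within the sparse-input budget: one must perform the adjacency computation and the cycle detection symbolically on coordinates rather than on an explicit grid, since materializing the $\Theta(nm)$ cells would blow the time bound. Once this bookkeeping is in place, the reduction to \cref{thm:extend-partial-fill} does all the real work, and the nondeterministic time bound follows.
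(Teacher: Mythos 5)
Your proposal is correct and takes essentially the same approach as the paper: the certificate is a guessed diagonal assignment for exactly the $O(k)$ cells incident to constrained vertices, verified by checking the vertex counts and acyclicity among the guessed diagonals, with \cref{thm:extend-partial-fill} supplying the extension to the full board. Your write-up is in fact somewhat more careful than the paper's, spelling out the converse direction and the coordinate-based (symbolic) verification needed to stay within the sparse-input time bound.
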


\begin{proof}
For a certificate, consider a sparse repsentation of a partially-filled Slant board
whose filled squares are exactly those adjacent to any vertex with a designated number.
After non-deterministically assigning values to such a certificate,
it can be checked whether the designated numbers are satisfied,
and whether there are any cycles present among the assigned diagonals in $O(k)$ time.
If all designated numbers are satisfied by the partially-filled solution,
and no cycles are present,
by 
\cref{thm:extend-partial-fill},
there exists some valid extension of the partially-filled Slant grid,
and hence the answer to the decision problem is ``YES''.
\end{proof}

This same argument also implies that Slant is Fixed-Parameter Tractable in the number of vertex constraints. As before, we can exhaustively enumerate all of the possible configurations around the $k$ clues and then use \cref{thm:extend-partial-fill} to check for a valid extension.

\begin{corollary}
    Slant can be solved in time $O(16^k(n+m))$ where $k$ is the number of vertex clues, and is thus FPT in the number of vertex clues.
\end{corollary}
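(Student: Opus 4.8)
The plan is to establish the bound by combining an exhaustive search over the local configurations near the $k$ clued vertices with the universality result of \cref{thm:extend-partial-fill}. First I would observe that each clued vertex is incident to at most four squares, but the key is to bound the number of squares that need to be fixed by clues. Rather than enumerate per-vertex, I would iterate over the (at most) $k$ clues and consider the collection of cells adjacent to clued vertices; the total number of such cells is $O(k)$, but naively enumerating all diagonal assignments to them gives a base larger than $16$. To obtain the stated base of $16$, I would instead enumerate on a per-clue basis: each clue constrains the four cells around it, and for a single vertex there are $\binom{4}{0}+\binom{4}{1}+\binom{4}{2}+\binom{4}{3}+\binom{4}{4}=16$ ways to choose which of the (up to) four incident diagonals point toward that vertex, matching the needed degree. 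So the branching factor per clue is $16$, and there are $k$ clues, yielding $16^k$ total configurations.

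Next I would handle the consistency and completion check for each of the $16^k$ enumerated configurations. For a fixed choice of local diagonals around all clues, I would first verify that the choices are mutually consistent: a single cell may be adjacent to more than one clued vertex, so two clues might demand conflicting diagonals in the same shared cell, and such configurations are discarded. I would also verify that each clue's degree requirement is exactly met by its chosen incident diagonals. Any configuration surviving these local checks defines a partially-filled board with no vertex constraints remaining (all clue-incident cells are now filled), so I can invoke \cref{thm:extend-partial-fill}: provided the partial filling contains no cycle, a valid extension is guaranteed to exist. Thus the decision reduces, for each configuration, to checking the partial board for cycles and consistency, which takes $O(n+m)$ time via a union-find or traversal over the $O(k) \le O(nm)$ filled edges together with the boundary structure; charging this to $O(n+m)$ is the natural bound since the relevant grid graphs have $O(nm)$ edges but the partial board touches only $O(k)$ of them plus boundary bookkeeping.

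The algorithm accepts if and only if some enumerated configuration passes both the local consistency check and the acyclicity check, since by \cref{thm:extend-partial-fill} acyclicity of a consistent partial fill is both necessary and sufficient for a valid completion to exist, and every genuine solution restricts to exactly one of the $16^k$ enumerated local configurations. Multiplying the $16^k$ configurations by the $O(n+m)$ per-configuration check yields the claimed $O(16^k(n+m))$ running time, and since the dependence on $n$ and $m$ is polynomial with the exponential factor confined to the parameter $k$, this exhibits Slant as fixed-parameter tractable in the number of vertex clues.

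The main obstacle I anticipate is the per-configuration cycle-detection cost and justifying the clean $O(n+m)$ factor. A subtlety is that the partial board's edges live in two complementary grid graphs whose trees must each touch the boundary (as noted in \cref{sec:Preliminary}), so the acyclicity test must be phrased correctly in the tree-cotree setting rather than as a naive cycle search in a single grid. I would argue that, since only $O(k)$ edges are filled and we only need to detect whether these filled edges alone form a cycle (not whether the eventual completion does — that is handled by \cref{thm:extend-partial-fill}), a union-find over the $O(k)$ filled diagonals suffices and runs in $O(k\,\alpha(k)) = O(n+m)$ time, the latter equality holding because $k = O(nm)$ and we absorb the inverse-Ackermann factor. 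Care is also needed to ensure the $16$ branching factor is tight and that shared cells between adjacent clues are not double-counted in a way that would inflate the base; I would address this by noting that discarding inconsistent configurations only prunes the search and never undercounts valid solutions.
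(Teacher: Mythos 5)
Your proposal is correct and takes essentially the same approach as the paper: the paper likewise enumerates all $16^k$ configurations of diagonals in the (at most four) cells incident to each of the $k$ clues, checks that the clues are satisfied and that the filled diagonals are acyclic, and then invokes \cref{thm:extend-partial-fill} to conclude that a valid completion exists exactly when these checks pass. Your binomial-sum derivation of the base $16$ is just $2^4$ (two diagonal choices for each of four incident cells) counted another way, and while your justification that the per-configuration check is $O(n+m)$ (via $k\alpha(k)=O(n+m)$, which fails when $k\gg n+m$) is loose, it is no less rigorous than the paper's own unexplained $(n+m)$ factor.
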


\subsection{Matroid Structure of Slant}

In this subsection, we show that Slant can be formulated as the intersection of five matroids.
While this on its own is not particularly surprising
(Slant is clearly in NP, and matroid intersection for $\geq3$ matroids
is known to be NP-hard~\cite{papadimitriou1998combinatorial}),
the particular structure of the matroids
will give us insight into some relaxations of Slant's constraints
that are polynomially-time solvable.
We will specifically show that Slant can be formulated as a weighted intersection of
four partition matroids, and one planar matroid.

A partition matroid is induced by a partition $\{P_i\}_i$
of its ground set $E$, and a mapping $P_i\mapsto b_i\in\mathbb{Z}_{\geq0}$
so that a set $S\subset E$ is independent if and only if $\left|S\cap P_i\right|\leq b_i$.

A planar matroid is one that is graphic and co-graphic.
That is, its basis is the set of maximal forests in some graph
(i.e., its independent sets are the forests in that graph),
and the basis of its dual is also the set of maximal forests in some graph.
Equivalently, the graph defining the graphic matroid is a planar graph.

First, we will observe that for an $n\times n$ Slant board,
the vertex constraints alone (without the cycle constraint) can be solved with an $0-1$ integer linear programming problem.
This will be done by creating a variable for each square indicating the choice of diagonal,
and a constraint for each vertex with a designated number.
The details of the construction are as follows:
\begin{itemize}[noitemsep,topsep=0pt,parsep=0pt,partopsep=0pt]
\item Use coordinates $(0,0)$ for both the top-left vertex of the grid, and for the top-left square of the grid.
\item For each square with coordinates $(x,y)$, create a variable $s_{x,y}\in\{0,1\}$.
\item For each vertex with coordinates $(x,y)$ whose designated number is $k\in\{0,1,2,3,4\}$, add an equality constraint:
        If $x+y$ is even, require that $\sum_{x_\Delta,y_\Delta\in\{-1,0\}}s_{x+x_\Delta,y+y_\Delta}=k$.
        Else, require that $\sum_{x_\Delta,y_\Delta\in\{-1,0\}}\left(1-s_{x+x_\Delta,y+y_\Delta}\right)=k$.
        Equivalently, that $\sum_{x_\Delta,y_\Delta\in\{-1,0\}}\left(s_{x+x_\Delta,y+y_\Delta}\right)=4-k$.
\end{itemize}
Given an assignment $s$ satisfying these constraints,
a corresponding solution to the Slant puzzle can be constructed as follows:
For a square whose coordinates are $(x,y)$, if $s_{x,y}=1$,
draw a diagonal between the two incident vertices
whose coordinate-sums are even.
Otherwise, draw a diagonal between the two incident vertices
whose coordinate-sums are odd.

We can use this ILP formulation to also obtain a matroid intersection formulation.
Let $E$ be the set of cells. $E$ will form the ground set for all five of our matroids.
Essentially, we will consider subsets $S\subset E$, and map these to possible solutions to a Slant instance.
A subset $S$ will represent the set of edges connecting incident vertices
with even coordinate-sums,
and likewise the set $E\setminus S$ will represent the set of edges
connecting incident vertices with odd coordinate-sums.
In this sense, the variables $s_{x,y}$ in the ILP formulation can be thought of as
a indicator functions for some subset $S\subset E$.

Consider the graph $G$ formed by connecting all vertices with even coordinate-sums.
This graph is bipartite, so call its parts $A$ and $B$.
A Slant instance induces a partition matroid over $E(G)$ for each of $A$ and $B$, which we will denote as $\mathcal L_A$ and
$\mathcal L_B$:
The independent sets of $\mathcal L_A$ are the subsets of edges which do not surpass the specified values at the vertices in $A$.
Similarly, the independent sets of $\mathcal L_B$ are the subsets of edges which do not surpass the specified values at the vertices in $B$.
In this case, $E(G)=E$ as defined before, so these are partition matroids over $E$, as desired.

We can also form a graph $G'$ formed by connecting all vertices with odd coordinate-sums.
Call its parts $C$ and $D$.
We define partition matroids over $E(G')$ in a symmetric manner,
except that in place of a specified value $k$ at a vertex in $C$ or $D$,
we use $4-k$.
Let the resulting partition matroids be denoted as $\mathcal L_C$ and $\mathcal L_D$.

With this formulation, the above ILP, encapsulating all but the cycle constraint,
can be solved via a $4$-way matroid intersection:
If a set $S$ is an independent set in all of
$\mathcal L_A$, $\mathcal L_B$, $\mathcal L_C$, and $\mathcal L_D$,
then its corresponding indicator $s$ function does not violate
relaxed constraints of the form:
$\sum_{x_\Delta\in\{-1,0\}}\sum_{y_\Delta\in\{-1,0\}}s_{x+x_\Delta,y+y_\Delta}\leq k$ (resp.~$4-k$, if applicable).
We weight the elements of $E$
by summing the left-hand side of all such constraints,
and we let $N$ be the sum of all right-hand sides.
Let $S'$
be
an independent set in all four matroids,
such that
its indicator function $s'$
is of maximum weight.
Then, $s'$
has
total weight equal to $N$
if and only if
all such constraints must be tight,
and hence if and only if $S'$ satisfies all non-cycle constraints of the Slant instance.

Lastly, we consider the cycle constraint,
which was not encapsulated in the ILP formulation.
A graph $G$ is cycle-free if and only if it is a forest.
This is encapsulated by a graphic matroid,
whose base set is $E(G)$.
In this case,
the graph we consider is that of the even coordinate-sum vertices.
This graph is planar and connected (in fact, it's a grid graph).
Hence, its basis consists of trees with $m$ edges for some value of $m$.
Call this matroid $\mathcal L_N$.
We modify the $4$-way matroid intersection to be a $5$-way matroid intersection as follows:
Add $1$ to the weight of every edge in $E$ as determined for the $4$-way intersection.
Let $S'$
be
an independent set in all five matroids,
such that
its indicator function $s'$
is of maximum weight.
Then, $s'$
has
total weight equal to $N+m-1$
if and only if
all constraints must be tight in all five matroids,
and hence if and only if $S'$ satisfies all constraints of the Slant instance.

Although three-way (and hence also four and five-way) matroid intersection
is NP-hard,
this formulation implies that
any pair of these constraints is solvable in polynomial time,
by the (two-way) matroid intersection theorem~\cite{papadimitriou1998combinatorial}.
For example, if we drop the cycle constraint, and only allow specified numbers
on vertices with even coordinate-sums (a checkerboard pattern),
then this relaxation of Slant becomes solvable in polynomial time.
This particular case also reduces to bipartite $b$-matching,
and even non-bipartite $b$-matching is well-known to be solvable in polynomial time without matroid intersection methods~\cite[Theorem 3.5.1]{marsh1979matching}.
The result can be more concisely summarized by the following pair of theorems:

\begin{theorem}
Let $a,b\in\{0,1\}$ be constants.
Suppose we are given a Slant board
such that each of its clue with coordinates $(x,y)$
has $x\equiv a\pmod 2$
and $y\equiv b\pmod 2$.
Then it can be checked in polynomial time
whether the board permits a solution
satisfying both the cycle and vertex constraints.
\end{theorem}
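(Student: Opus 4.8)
The plan is to exploit the fact that all clues lie on a single parity class of vertices, which collapses the five-matroid intersection formulation to a two-matroid intersection. Concretely, fix constants $a,b\in\{0,1\}$ so that every clued vertex $(x,y)$ satisfies $x\equiv a$ and $y\equiv b\pmod 2$. The key observation is that all such clued vertices share the same coordinate-sum parity, namely $a+b\pmod 2$; thus every clue lives in exactly one of the two grid graphs $G$ (even coordinate-sum) and $G'$ (odd coordinate-sum), and moreover in only one of the two bipartite parts of that graph. Without loss of generality, suppose all clues fall on vertices of $G$ lying in part $A$. Then the partition matroids $\mathcal L_B$, $\mathcal L_C$, and $\mathcal L_D$ become trivial — each imposes no nontrivial upper bound, since every clue target has been absorbed into $\mathcal L_A$ — and the only constraints that remain active are the single partition matroid $\mathcal L_A$ encoding the vertex constraints and the planar (graphic) matroid $\mathcal L_N$ encoding acyclicity.

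With only these two matroids surviving, I would run the same weighted-maximum-common-independent-set computation described in the excerpt, now as a genuine \emph{two-way} matroid intersection. First I would assign each cell of $E$ the weight obtained by summing the left-hand sides of the relaxed (inequality) constraints it appears in, add $1$ to each weight to account for the forest (graphic) matroid $\mathcal L_N$, and set the target threshold to $N+m-1$, where $N$ is the sum of all right-hand sides of the $\mathcal L_A$ constraints and $m$ is the rank of $\mathcal L_N$ (one less than the number of even-coordinate-sum vertices). Then I would compute a maximum-weight common independent set of $\mathcal L_A$ and $\mathcal L_N$; by the argument already given in the excerpt, its weight equals the threshold if and only if every vertex constraint is tight and the chosen edge set is a forest, i.e.\ if and only if the board admits a valid Slant solution. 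Since two-matroid intersection (even in its weighted form) is solvable in polynomial time by the matroid intersection theorem cited in the paper, this yields the desired polynomial-time decision procedure.

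The step I expect to require the most care is verifying that the three dropped partition matroids genuinely impose no active constraint, rather than quietly enforcing something. The relaxed constraints at an \emph{unclued} vertex should carry an upper bound that is automatically satisfied (effectively a bound of $4$ on a sum of four $0/1$ variables, or its complementary form on the dual side), so the corresponding matroid is the free matroid on $E$; intersecting with a free matroid changes nothing, and the associated right-hand-side contributions to $N$ and the weights must be handled consistently so that the threshold bookkeeping still works out. A second point meriting attention is the translation of a $\mathcal L_A$-constraint from its relaxed inequality form back to the exact equality form demanded by the puzzle: as in the five-matroid argument, tightness of the maximum-weight objective is what forces equality, so I would confirm that the weighting scheme and threshold $N+m-1$ correctly certify simultaneous tightness of all $\mathcal L_A$ constraints together with maximality (hence spanning-forest status) in $\mathcal L_N$. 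Once these accounting details are checked, the reduction to two-matroid intersection is immediate and the polynomial-time bound follows directly.
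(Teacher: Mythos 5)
Your proposal is correct and takes essentially the same route as the paper: the paper also observes that when all clues lie in a single parity class, only one of the four partition matroids is nontrivial, so the five-matroid formulation collapses to a weighted two-way intersection of that partition matroid with the planar matroid $\mathcal L_N$, which is polynomial-time solvable by the matroid intersection theorem. The bookkeeping caveats you flag (free matroids at unclued vertices, tightness certification via the weight threshold) are exactly the details the paper's own argument relies on, inherited verbatim from its five-matroid discussion.
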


\begin{theorem}
Let $a_1,a_2,b_1,b_2\in\{0,1\}$ be constants.
Suppose we are given a Slant board
such that each of its clue with coordinates $(x,y)$
has either both
$x\equiv a_1\pmod 2$
and $y\equiv b_1\pmod 2$,
or both
$x\equiv a_2\pmod 2$
and $y\equiv b_2\pmod 2$.
Then it can be checked in polynomial time
whether the board permits a solution
ignoring the cycle constraint
and satisfying the vertex constraints.
\end{theorem}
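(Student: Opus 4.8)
The plan is to leverage the five-matroid formulation developed above, observing that since we are explicitly told to ignore the cycle constraint, the planar matroid $\mathcal{L}_N$ plays no role and we are left with only the four partition matroids $\mathcal{L}_A$, $\mathcal{L}_B$, $\mathcal{L}_C$, $\mathcal{L}_D$. The crucial observation I would establish first is that these four matroids are in bijection with the four parity classes of vertex coordinates: the even-coordinate-sum vertices split into the class $A$ (those with $x\equiv y\equiv 0\pmod2$) and the class $B$ (those with $x\equiv y\equiv 1\pmod2$), while the odd-coordinate-sum vertices split into $C$ and $D$ (the classes $(0,1)$ and $(1,0)$). Each of the four partition matroids imposes finite degree bounds only at the vertices of its own class, so a clue at coordinates $(x,y)$ contributes a nontrivial block to exactly one of the four matroids, determined by $(x\bmod2,\,y\bmod2)$.

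Next I would invoke the hypothesis that every clue lies in one of the two parity classes $(a_1,b_1)$ or $(a_2,b_2)$. This means that at most two of the four matroids---say $\mathcal{L}_A$ and $\mathcal{L}_C$ after relabelling---receive any finite degree bounds; the remaining two have every block unconstrained (bound equal to the block size) and are therefore free matroids, whose independence condition is vacuous. Consequently the common independent sets of all four partition matroids coincide with the common independent sets of just the two active ones, reducing the problem to a two-matroid intersection over the ground set $E$.

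I would then apply the two-matroid intersection theorem to solve the weighted intersection in polynomial time. Reusing the weighting scheme from the construction above---weight each cell by the number of active clued vertices incident to it, and let $N$ be the sum of all the right-hand sides---a maximum-weight common independent set $S'$ attains weight exactly $N$ if and only if every relaxed inequality constraint is tight, i.e.\ if and only if the corresponding diagonal assignment satisfies every vertex constraint with equality. Deciding the instance therefore amounts to computing this optimum and checking whether it equals $N$.

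The step I expect to require the most care is the correspondence asserted in the first paragraph: I must confirm that the bipartitions of the two grid graphs $G$ and $G'$ are precisely the coordinate-parity classes $(x\bmod2,\,y\bmod2)$, so that ``clues confined to two parity classes'' translates cleanly into ``at most two active partition matroids.'' Once this identification is pinned down, the reduction to a two-matroid intersection---and hence polynomial-time solvability, equivalently a bipartite $b$-matching as noted above---is immediate, and holds uniformly whether the two active classes lie on the same side (both even-sum or both odd-sum) or on opposite sides.
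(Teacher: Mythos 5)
Your proposal is correct and follows essentially the same route as the paper: it identifies the four coordinate-parity classes with the four partition matroids $\mathcal{L}_A$, $\mathcal{L}_B$, $\mathcal{L}_C$, $\mathcal{L}_D$, observes that the hypothesis leaves at most two of them non-trivially constrained, and then solves the resulting weighted two-matroid intersection (checking whether the optimum equals $N$), exactly as the paper's discussion preceding the theorem does. Your explicit verification that the bipartitions of $G$ and $G'$ coincide with the $(x\bmod 2,\,y\bmod 2)$ classes is a detail the paper leaves implicit, but it is not a departure from their argument.
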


\section{NP-completeness}
\label{sec:NP-completeness}

We will now prove that deciding if a Slant puzzle on an $n\times n$ board is NP-complete.
We use a two-step reduction:
We first review a (slightly modified) reduction from finding a Hamiltonian cycle with a single required edge in planar, bipartite, 3-regular graph
to finding a Hamiltonian cycle with a single required edge in a grid graph~
\cite{itai1982hamilton, munaro2017line} (this citation uses a slightly different Hamiltonian cycle variant, so we will add one additional gadget, and show that it is equivalent to their reduction).
Then, instead of using the latter problem for another black-box reduction,
we will show that we can replicate the functionality of each gadget used,
to form a two-step reduction.
In this section we will always be using the rotated view of Slant puzzles.

First, we will briefly summarize the key components of the grid graph reduction; readers interested in a full proof of this reduction's correctness should see~\cite{itai1982hamilton}. We then make two small modifications to the reduction. Next, we will give gadgets in Slant that force a portion of the graph to be a Hamiltonian cycle along a portion of the grid. Finally, we will address global connectivity issues and complete the proof.

\paragraph*{Grid graph reduction overview}
In Itai, Papadimitriou, and Szwarcfiter's reduction\cite{itai1982hamilton} of Hamiltonian cycle/path in grid graphs from Hamiltonian cycle in planar, bipartite, max-degree-$3$ graphs they take a grid embedding of the planar graph and construct edge and vertex gadgets. These are the vertices in \cref{fig:most-np-gadgets-phase-i} (note that the grid graph includes all unit-length edges, not just the example path going through the figure).
Call the planar graph $P$ and the grid graph $G$.
\begin{figure}
    \centering
    \includegraphics[scale=0.46,page=15]{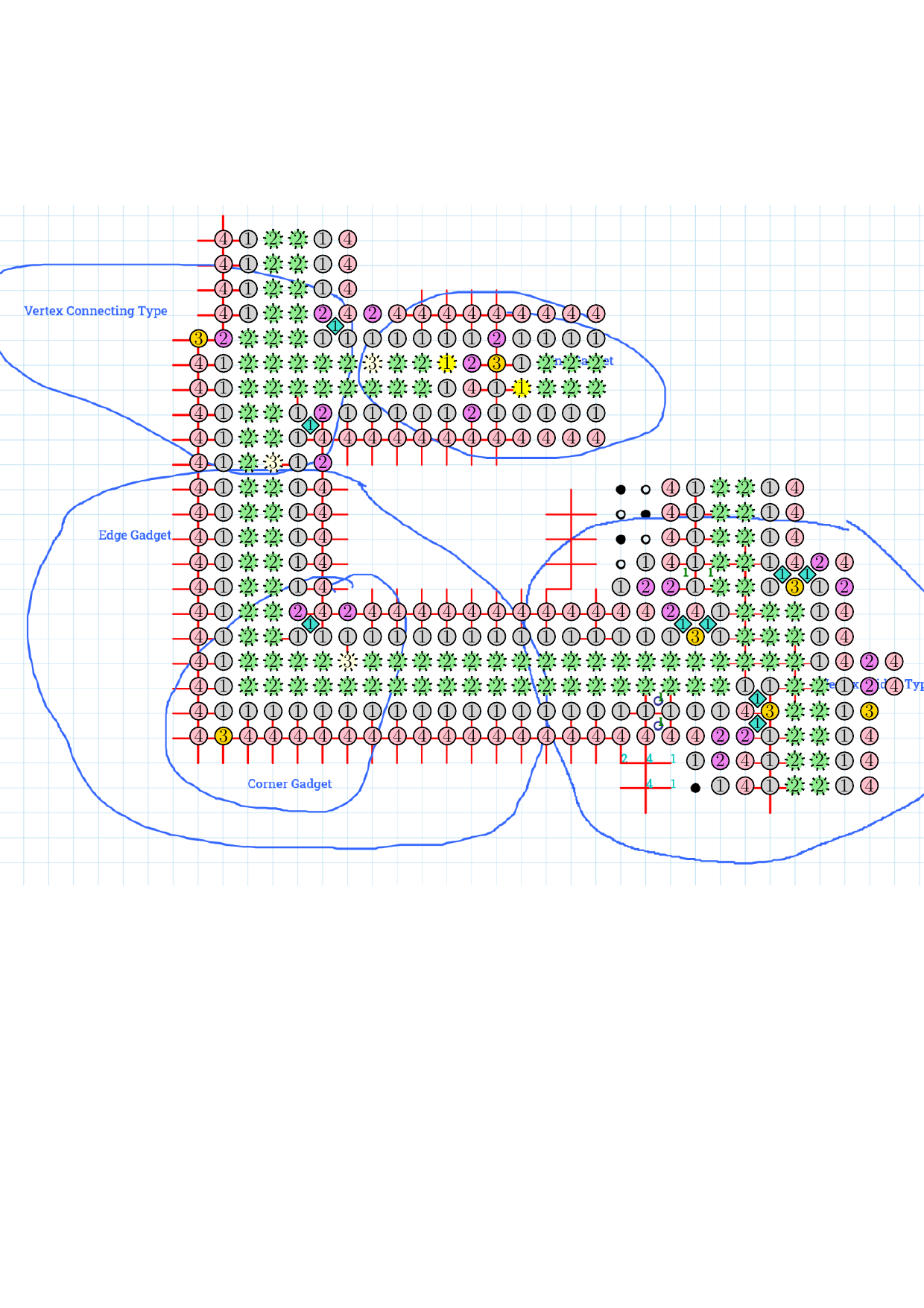}
    \caption{All of the gadgets used in the NP-Hardness construction for Hamiltonian path in grid graphs, and an example partial path going through them.}
    \label{fig:most-np-gadgets-phase-i}
    \label{fig:most-np-gadgets-phase-i-path}
\end{figure}

Edge gadgets are channels of vertices which are two vertices wide in $G$ and follow the layout of the embedding of $P$.
In a potential Hamiltonian cycle, they can be covered with either a zig-zag path (alternating pairs of left and right turns) or a $U$-shaped path.
A zig-zag path through $G$ has a connection at both ends of the edge gadget and corresponds to that edge gadget being in the Hamiltonian cycle in $P$.
A U-shaped path through $G$ has both ends at the same side of the edge gadget and corresponds to an edge which is not in the Hamiltonian cycle (however, we still need to cover all of the vertices in $G$).
Vertex gadgets are $3\times 3$ squares of vertices in $G$ and come in two types based on what parity class they belong to in $P$.
Vertex gadgets are split into two classes depending on the part of $P$ from which they were generated.
The difference between the classes lies in how they connect to their incident edge gadgets.
\cref{fig:most-np-gadgets-phase-i-path} provides a visualization of the potential paths and how they interact with the vertex gadgets.

Now we make two small modifications to this reduction. First, we want our graph to be $3$-regular, not just max-degree-$3$. Since this is a strict subproblem and already known to be NP-complete~\cite{akiyama1980np}, no additional work is needed. Next, we want the problem we are reducing from to have a specified edge which is required to be in the Hamiltonian cycle. This version of the problem is also known to be NP-complete~\cite{munaro2017line}, but this is not a strict subproblem. 
In our proof, we must also require a single forced-edge gadget.
Taking inspiration from Itai et al.'s~\cite{itai1982hamilton} use of degree-$1$ nodes to force the start and end of a Hamiltonian path,
we modify the required edge with the Start/End gadget shown in \cref{fig:most-np-gadgets-phase-i}. The degree-$1$ vertices in the Start/End gadget will force any Hamiltonian path in $G$ to start and end at those locations and the rest of the edge gadget to be filled with a zig-zag pattern (\cref{fig:most-np-gadgets-phase-i-path}).
With this property, the correctness of the reduction to Hamiltonian path in grid graphs using this gadget follows from the correctness of the Hamiltonian cycle reduction described by Itai et al.
We use this gadget instead of the approach of Itai et al.~because it will make the second phase of our two-step reduction more straightforward to describe and argue.

\begin{figure*}
    \begin{minipage}[b][][b]{.65\textwidth}
      \centering
        \includegraphics[scale=0.58,page=4]{gadgets}
        \caption{Most of the gadgets used in the NP-Hardness construction for Slant. Circles are vertices with their degree constraint written inside. Green, dotted circles are vertices in $S_G$. Diamonds are constrained vertices in the dual. Red lines are forced edges.}
            \label{fig:most-np-gadgets}
    \end{minipage}%
    \hfill
    \begin{minipage}[b][][b]{.3\textwidth}
      \centering
        \includegraphics[scale=0.9,page=4]{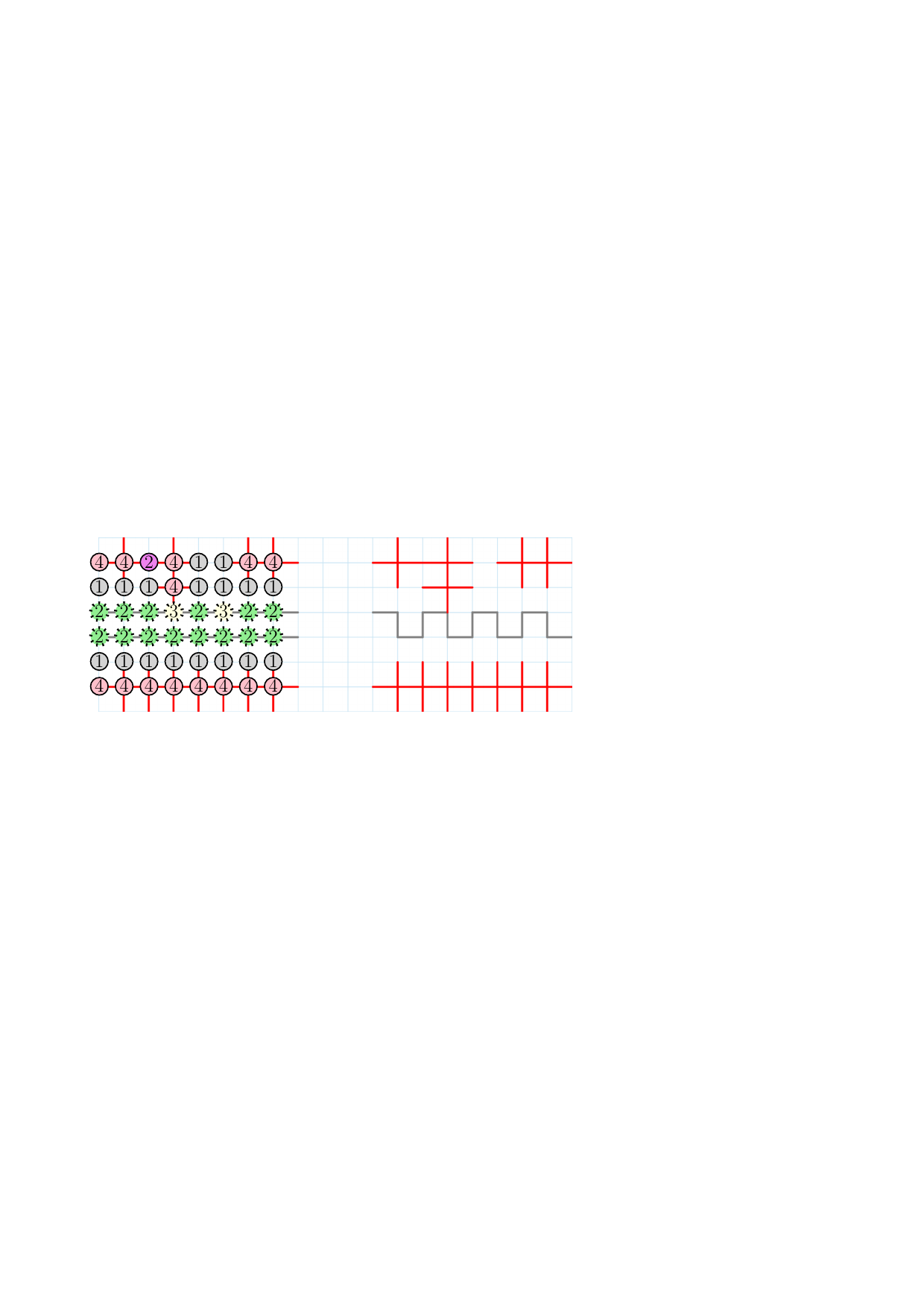} \\
        \vspace{1.75em}
        \includegraphics[scale=0.9,page=5]{gadgets-face}
        \caption{Face connection gadget in the NP-Hardness construction. Forced edges are in red.
        Two different configurations are given.}
        \label{fig:face-np-gadget}
    \end{minipage}
    
\end{figure*}

\paragraph*{Slant reduction}
With the above construction in mind we now turn to our second phase: how the grid graph reduction can be simulated in Slant.
That is, we will now construct a Slant puzzle.
Call the grid graph representing the even parity of Slant vertices $S$.
We will at all times assume the Slant puzzle is sufficiently large to admit our construction,
which will only require polynomial size.
We will first try to ensure that a portion $S_G$ of $S$ must admit a Hamiltonian path to properly satisfy that subset of the puzzle, and later we will discuss connecting the path to the rest of the puzzle to satisfy the cycle constraint of Slant.
Most importantly, we will construct portions of $S$ which act as barriers to $S_G$; it will not be possible to connect them to specific adjacent vertices, thus carving out a subset $S_G$ of $S$ that corresponds directly to $G$.
We will implement the grid vertices in $G$ with degree-$2$ constrained vertices in $S_G$, which cannot connect to any vertex in $S\setminus S_G$,
and some degree-$3$ constrained vertices that have exactly one forced connection to a grid point in $S\setminus S_G$ (this will be important to global connectivity, which we will discuss later),
and do not permit any others.
Additionally, there are exactly two degree-$1$ grid vertices in $G$,
which will correspond to degree-$1$ constrained vertices in $S_G$
that act as endpoints to any possible Hamiltonian path.
Any valid assignment of edges in $S_G$ satisfying these degree and the cycle constraint will be a Hamiltonian path on $S_G$, which itself will imply a Hamiltonian path through $G$.
Finally we will address the global connectivity issues in the puzzle that will allow us to satisfy all the constraints of the Slant puzzle if $G$ admits a Hamiltonian path and enforce that any assignment of $S$ that satisfies all the constraints of the Slant puzzle must have a Hamiltonian path over the vertices in $S_G$.

In order to reliably obtain the behaviour we want from $S_G$,
while admitting a Slant solution if $G$ contains a Hamiltonian path,
we will require Slant gadgets that admit each of the gadgets used by Itai et al..
All of the gadgets we will use in this reduction can be seen in \cref{fig:most-np-gadgets}.
There is a direct correspondence between the gadgets we use for Slant,
and the gadgets used in the initial grid graph reduction.
That is, we will constrain each part of $S_G$
corresponding to a gadget in $G$
so that its behaviour is the same in any valid solution to the Slant puzzle.
The main idea of the construction is to add degree-constrained vertices to $S$
that add \emph{forced} edges.
That is, in any valid solution to the final Slant instance we will construct,
these edges will be present.
Forced edges can be thought of as being possible to ``locally deduce'',
i.e.~determine that they are required in any valid solution using only a small neighbourhood around themselves.
Most of these forced edges rely on two useful degree-constraint properties to force edges to be in the primal or the dual. One,
degree-$4$ vertices must have all of their adjacent edges.
Two, if a degree constrained vertex already has as many adjacent forced primal edges as its constraint allows, then all other edges are forced to be in the dual (and vice versa).
The primary backbone of the edge gadgets is a sequence of degree-$4$ constrained vertices with adjacent degree-$1$ constrained vertices which are immediately saturated, preventing further connectivity.
In \cref{fig:most-np-gadgets},
all of the presented gadgets have their corresponding forced edges drawn.
Some of these edges require slightly more complex, though straightforward, local deductions:
\begin{itemize}[noitemsep,topsep=0pt,parsep=0pt,partopsep=0pt]
\item Degree-$1$ constrained vertices in the dual force three surrounding edges to be in the primal.
\item If two degree-$1$ constrained vertices have an edge between them they cannot have any other edges and will thus be disconnected from the rest of the graph (violating the cycle constraint in the dual).
\item
If
we have a pair of degree-$1$ constrained vertices next to a degree-$1$ constrained vertex in the dual then all three edges around that vertex in the dual are forced.
\item An edge is impossible if one of the endpoint vertices is already saturated.
\item A vertex diagonally across from a degree-$4$ constrained vertex cannot have two incident edges along the square shared with $4$, since doing so would guarantee the formation of a square (a cycle). Hence, if one of these edges is already present, the other is impossible.
\item For a given vertex $v$ with edge constraint $k$, if all but $k$ of its potential edges are impossible, then those edges are forced.
\end{itemize}
The above set of deduction rules is enough to obtain the entire set of edges
of the backbone
(i.e., the edges in \cref{fig:most-np-gadgets}).
In fact,
after adding all of the edges forced by degree-$4$ constrained vertices,
each of the remaining backbone edges at a vertex $v$
can be deduced from the constraints contained in the $3\times 3$ window
of vertices centred at $v$ using the above deductions.
Hence,
each deduction step can also be algorithmically implemented with a constant-time brute force check
looking at a small window of constraints on a partially-filled board.

\paragraph{Global connectivity.}
We have three classes of (primal) vertices in $S$:
The vertices in $S_G$,
the degree-specified vertices in $S\setminus S_G$ (i.e., the barrier),
and the unspecified vertices in $S\setminus S_G$ (i.e., the face interiors).
As specified, the degree-specified vertices in $S\setminus S_G$
will form a cycle around the boundary of each face in $P$ for any possible assignment of edges,
which would violate the cycle constraint in the primal.
Additionally, there are no connections between the first or second sets of vertices,
which would violate the cycle constraint in the dual.
We will solve both of these with a gadget simultaneously.
To fix this we will assign each face an edge and apply the edge-connection gadget, shown in Figure~\ref{fig:face-np-gadget}, to connect to that face. If the edge is not long enough to insert the gadget, we can simply scale up the embedding by an at-most polynomial factor.
In the connection gadget, the edge above the newly connected degree-$3$ constrained vertex is forced. Further this is the only place the face is able to connect to the rest of the construction. Thus the number of free edges at that vertex remains 2, the connectivity to the rest of the path is not changed, and thus this does not impact the admissible solutions in the edge.
Moreover, this gadget adds a single ``gap'' in the barrier of each face, removing the cycle.
Finally, for the remaining vertices in the puzzle (the unspecified vertices in $S\setminus S_G$), we leave them unconstrained which will allow the space outside our construction to be filled in using \cref{thm:extend-partial-fill}
for any valid partial solution to the vertices in $S_G$ and degree-specified vertices in $S\setminus S_G$.
There is no possible interaction between the vertices in $S_G$
and the unspecified vertices in $S\setminus S_G$, and hence this does not affect the ability of the vertices in $S_G$ to form a Hamiltonian path (corresponding to a Hamiltonian cycle with a specified forced edge in $P$).

Now, if the original 3-regular, bipartite, planar Hamiltonian cycle problem has a solution,
then so does the Hamiltonian path problem in $G$.
In our Slant instance, we can fill in the vertex and edge gadgets based on the solution in $G$, fill the faces as connected components with \cref{thm:extend-partial-fill}, and fill in the forced edges which gives a valid Slant solution (after rotating $45^\circ$).

If there is no solution to the Hamiltonian cycle problem in $P$, then we know there is no Hamiltonian path in the grid graph $G$. Our construction ensured that the green vertices which are in the same grid pattern at $G$ only have a solution if there is a Hamiltonian path among those vertices. Thus there would be no solution to the Slant puzzle, completing the reduction.

Combining the prior NP-hardness construction with the observation that checking a potential solution to a Slant puzzle can be done in polynomial time gives us our desired result.

\begin{theorem} \label{thm:np-complete}
Deciding if there is a solution to an $n\times n$ Slant puzzle is NP-complete.
\end{theorem}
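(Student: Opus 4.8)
The plan is to establish both membership in NP and NP-hardness. Membership is straightforward under the dense representation of an $n\times n$ board: given a candidate assignment of one diagonal per cell, we verify in $O(n^2)$ time that each constrained vertex has exactly the prescribed number of incident diagonals, and we check the cycle constraint by testing the diagonal graph for cycles (e.g.\ via a union--find pass). Since the per-cell assignment is a polynomial-size certificate, the decision problem lies in NP.

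For hardness, I would reduce from Hamiltonian cycle with a single forced edge in planar, bipartite, $3$-regular graphs, which is known to be NP-complete. The reduction proceeds in two phases. In the first phase I would invoke Itai, Papadimitriou, and Szwarcfiter's grid-embedding construction, converting the planar graph $P$ into a grid graph $G$ using two-vertex-wide edge channels (each traversable by either a zig-zag path, corresponding to using the edge, or a $U$-turn, corresponding to omitting it) together with $3\times 3$ vertex gadgets. To realize the forced edge, I would replace the plain Hamiltonian-cycle setup with a Start/End gadget whose two degree-$1$ vertices pin the endpoints of a Hamiltonian \emph{path} and force the incident channel into the zig-zag pattern; correctness then follows from the correctness of Itai et al.'s original reduction.

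In the second phase---the core of the argument---I would encode this grid-graph instance inside the even-parity grid $S$ of a Slant board, working throughout in the rotated $45^\circ$ view. The key mechanism is \emph{forced edges}: a degree-$4$ constraint saturates all four primal edges, and a degree-$k$ vertex already incident to $k$ forced primal edges forces its remaining edges into the dual. Chaining such local deductions (together with the handful of rules for degree-$1$ dual vertices, pairs of saturated degree-$1$ vertices, and the diagonal interaction with degree-$4$ vertices) I would lay down a backbone barrier that carves out a subregion $S_G$ isomorphic to $G$. Inside $S_G$, degree-$2$ constraints mimic interior grid vertices, some degree-$3$ constraints supply a single forced connection outward, and two degree-$1$ constraints realize the path endpoints; consequently, satisfying the degree and cycle constraints on $S_G$ is exactly equivalent to finding a Hamiltonian path in $G$.

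The main obstacle---and the step I expect to demand the most care---is global connectivity, since Slant's cycle constraint must hold simultaneously in the primal and the dual grid. The barrier surrounding each face of $P$ naturally closes into a cycle (violating the primal constraint), while the barrier and face interiors are otherwise cut off from the rest of the board (violating the dual constraint). I would resolve both problems at once with the face-connection gadget: assign each face an edge of the embedding and attach a gadget that opens a single gap in that face's barrier while adding exactly one forced connection into its interior, keeping the number of free edges at the affected vertex equal to $2$ so the channel's admissible traversals are unchanged. This simultaneously breaks the offending primal cycle and restores dual connectivity. Finally, I would fill every remaining unconstrained cell by appealing to \cref{thm:extend-partial-fill}, which guarantees that any acyclic partial fill of $S_G$ and the barrier extends to a full valid solution. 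Combining the two phases shows that $P$ admits the required Hamiltonian cycle if and only if the constructed (polynomial-size) Slant board is solvable, which together with NP membership completes the proof.
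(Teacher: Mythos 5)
Your proposal is correct and follows the paper's proof essentially step for step: the same two-phase reduction from Hamiltonian cycle with a forced edge in planar, bipartite, $3$-regular graphs via Itai et al.'s grid-graph construction augmented with the Start/End gadget, the same forced-edge/barrier mechanism carving out $S_G$ with degree-$2$, degree-$3$, and degree-$1$ constraints, the same face-connection gadget resolving the primal and dual connectivity problems simultaneously, and the same appeal to \cref{thm:extend-partial-fill} to fill the unconstrained remainder of the board. There is nothing substantively different to compare.
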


\section{Conclusion and Open Questions}

In this paper we've connected the recreational logic puzzle Slant to the Hamiltonian path problem and matroid theory. Through these tools we've shown solving Slant puzzles is NP-complete and given algorithms to solve various special cases and simplifications of Slant.
In particular, we showed that the constraints of a Slant puzzle break into five classes,
and the simplifications that use just any two of them are solvable in polynomial time.
This combination of results leaves open the question of exactly how many/which classes of constraints are needed for NP-completeness.
Our construction currently uses all five classes of constraints (the cycle constraint, and all four classes of vertex constraints) and all types of vertex constraints except $0$.
It would be very interesting to know, for example, if the problem is still hard without the cycle constraint,
or if we only need the cycle constraint and vertex constraints in the primal graph
(since our construction makes very limited use of the dual graph).
We also know that puzzles which only contain $0$ and $4$ degree constraints are easy to solve, but what about other subsets?

Another category of questions deals with the uniqueness and quantity of solutions.
We can always find a way to fill a partially filled Slant puzzle with no remaining degree constraints,
but is there a polynomial time algorithm to count the number of solutions?
Further, since having a unique solution is a common design goal in pencil-and-paper puzzles,
it is natural to ask: is Slant ASP-hard~\cite{Yato-Seta-2003}?
Also, is counting the number of solutions \#P-hard\cite{valiant1979complexity}?

Finally, our view of Slant as a partition of edges between a planar and dual graph
naturally leads to a generalization of the puzzle to other types of graphs.
Is the problem still NP-hard on the triangular/hexagonal grid? In this case we consider the problem of partition edges between a primal and dual graph; as the interpretation of diagonals does not carry over to triangular girds.
Are there interesting and aesthetically pleasing similar puzzles on different forms of planar graphs?

\small
\bibliographystyle{abbrv}
\bibliography{main}

\end{document}